\newtheorem{definition}{Definition}
\newtheorem{theorem}{Theorem}
\newproof{proof}{Proof}
\newcommand{\mat}[1]{\boldsymbol{#1}}
\newcommand{\pp}[1]{{\left( #1 \right)}}
\newcommand{\br}[1]{{\{ #1 \}}}
\newcommand{\norm}[1]{{ \Vert #1 \Vert }}
\newcommand{\abs}[1]{{ | #1 | }}
\newcommand{\sabs}[1]{{ | #1|^2 }}
\newcommand{\snorm}[1]{{ \Vert #1 \Vert^2 }}
\DeclareMathOperator*{\argmax}{arg\,max}
\def\mrt{{\text{\tiny{MRT}}}}
\def\tsum{{\text{\tiny{sum}}}}
\def\zf{{\text{\tiny{ZF}}}}  % ZF
\def\rmrt{{\text{\tiny{R-MRT}}}}% robust MRT
\def\H{{H}} % Hermitian
\def\ld{{\log_2}} % Logarithm dualis
\def\bI{{\mat{I}}} % Identity
\def\bA{{\mat{A}}} % unitary matrix
\def\bh{{\mat{h}}} % channel vector h
\def\ebh{{\tilde{\mat{h}}}} % estimate channel vector h
\def\bw{{\mat{w}}} % beamforming vector w
\def\ebno{{\frac{E_b}{N_0}}}
\def\ebnof{{{E_b}/{N_0}}}
\def\cC{{C}}
\def\sC{{\mathbf{C}}}
\def\pCSI{{\text{pCSI}}}
\def\iCSI{{\text{iCSI}}}
\def\SNR{{\rho}}
\journal{Signal Processing}
\begin{document}

\begin{frontmatter}

% ################################### --------- Title

\title{Robust Beamforming in Interference Channels with Imperfect Transmitter Channel Information\tnoteref{label1}}%

\tnotetext[label1]{Part of this work has been performed in the framework of the European research project SAPHYRE, which is partly funded by the European Union under its FP7 ICT Objective 1.1 - The Network of the Future. This work is also supported in part by the Deutsche Forschungsgemeinschaft (DFG) under grant Jo 801/4-1.}

\author{Rami Mochaourab\fnref{corAuth}}\ead{Rami.Mochaourab@tu-dresden.de}
\author{Eduard A. Jorswieck}\ead{Eduard.Jorswieck@tu-dresden.de}
\address{Department of Electrical Engineering and Information Technology \\Dresden University of Technology, 01062 Dresden, Germany. \\ Phone: +49-351-46332239. Fax: +49-351-46337236.}

\fntext[corAuth]{Corresponding author}

\begin{abstract}
We consider $K$ links operating concurrently in the same spectral band. Each transmitter has multiple antennas, while each receiver uses a single antenna. This setting corresponds to the multiple-input single-output interference channel. We assume perfect channel state information at the single-user decoding receivers whereas the transmitters only have estimates of the true channels. The channel estimation errors are assumed to be bounded in elliptical regions whose geometry is known at the transmitters. Robust beamforming optimizes worst-case received power gains, and a Pareto optimal point is a worst-case achievable rate tuple from which it is impossible to increase a link's performance without degrading the performance of another. We characterize the robust beamforming vectors necessary to operate at any Pareto optimal point. Moreover, these beamforming vectors are parameterized by $K(K-1)$ real-valued parameters. We analyze the system's spectral efficiency at high and low signal-to-noise ratio (SNR). Zero forcing transmission achieves full multiplexing gain at high SNR only if the estimation errors scale linearly with inverse SNR. If the errors are SNR independent, then single-user transmission is optimal at high SNR. At low SNR, robust maximum ratio transmission optimizes the minimum energy per bit for reliable communication. Numerical simulations illustrate the gained theoretical results.
\end{abstract}

\begin{keyword}
Interference channel \sep multiple-input single-output \sep imperfect channel state information \sep robust beamforming \sep Pareto optimality
\end{keyword}

\end{frontmatter}

\section{Introduction}
We consider multiple transmitter-receiver pairs simultaneously operating on the same frequency band. The signal from a transmitter is useful information at the intended receiver while it is  regarded as interference at unintended receivers. All transmitters are equipped with multiple antennas while the receivers use only a single antenna. This setting corresponds to the multiple-input single-output (MISO) interference channel (IFC) \cite{Vishwanath2004}.

In the MISO IFC, the beamforming vectors used at the transmitters impact the performance of the systems. A jointly efficient operating point corresponds to a Pareto optimal point in which it is not possible to improve the performance of one link without degrading the performance of at least another link. Designing a Pareto optimal mechanism requires finding the joint beamforming vectors used at the transmitters that lead to the Pareto optimal point. In the MISO IFC, finding specific Pareto optimal points such as the maximum sum-rate or proportional-fair is proven to be strongly NP-hard \cite{Luo2008,Liu2011}. The importance of characterizing the set of beamforming vectors necessary for the links' Pareto optimal operation is twofold. First, the set of relevant beamforming vectors is reduced to a considerably small subset of all feasible beamforming vectors. Second, the characterized set of efficient beamforming vectors is parameterized by a number of scalars which can even reduce the complexity of indicating the required beamforming vectors.

Characterizing the beamforming vectors necessary to achieve all Pareto optimal points in the MISO IFC has been carried out in several works for the case of perfect channel state information (CSI) at the transmitters. In \cite{Jorswieck2008a}, the efficient beamforming vectors are parameterized by $K(K-1)$ complex-valued parameters, where $K$ is the number of links. For the special two-user case, the efficient beamforming vectors are proven to be a linear combination of maximum ratio transmission and zero forcing transmission. The extension to a real-valued parametrization for the general $K$-user case is conducted in \cite{Shang2011,Zhang2010,Mochaourab2011} where $K(K-1)$ real-valued parameters are required to achieve all Pareto optimal points. Recently in \cite{Bjornson2011a}, parametrization of the efficient beamforming vector is provided in the multi-cell MISO setting with general linear transmit power constraints at the transmitters. For the case of MISO IFC and total power constraint at the transmitter, the number of required parameters is $2K - 1$. In \cite{Mochaourab2011a,Lindblom2011}, the beamforming vectors necessary and sufficient to achieve all Pareto optimal points are characterized and parameterized by a single real-valued parameter for the two-user MISO IFC case. If the transmitters only know the covariance matrices of the channel vectors, characterization of the Pareto optimal transmission strategies is done for the two-user MISO IFC in \cite{Lindblom2010a}. In a multi-cell MISO setting with uncertainty in CSI at the transmitters, robust Pareto optimal beamforming is obtained by robust fairness-profile optimization in \cite{Bjornson2011}. All Pareto optimal points in the performance region are achieved requiring $K-1$ real-valued parameters. In addition, a monotonic optimization algorithm is applied to achieve specific Pareto optimal points such as maximum sum and proportional fair performance points.

The presence of CSI at a transmitter is essential in order to increase the performance of the multi-antenna system through sophisticated beamforming techniques \cite{Goldsmith2003}. In practical communication systems, CSI at the transmitter is usually not perfect \cite{Vu2007}. The transmitter gains CSI either through reciprocity of the uplink and downlink channels or through a feedback link from the receiver \cite{Vu2007a}. Both techniques of acquiring CSI at the transmitter entail a delay which leads to outdated channel information and hence to a mismatch to the true channels. In the feedback model, the receiver quantizes the channel vector according to a vector codebook and sends the transmitter the channel index which is nearest to the true channel. This mechanism reduces the feedback bits required to identify the channel at the transmitter \cite{Love2008}. The size of the codebook used determines the feedback overhead and the accuracy of the selected channel vector to the true channel \cite{Jindal2006}.

With imperfect CSI at the transmitter in a multi-user MISO downlink system, robust beamforming is studied in \cite{Shenouda2008} to minimize the worst-case mean-square-error (MSE), and in \cite{Shenouda2009,Vucic2009} the problem of minimizing the transmission power subject to worst-case quality-of-service (QoS) constraints at the receivers is analyzed. Furthermore, robust transceiver design in a multi-user multiple-input multiple-output (MIMO) system is considered in \cite{Vucic2009a,Zheng2008}. In the multi-cell multiuser MISO setting with imperfect CSI at the transmitters, the problems of maximizing the worst-case weighted sum rate and minimizing the weighted sum transmission power subject to worst-case QoS constraints are studied in \cite{Tajer2011} and \cite{Shen2011a}, respectively.

In this paper, we assume that the transmitters have imperfect CSI. We adopt a deterministic uncertainty model in which the channel estimation error is bounded in an uncertainty region. The channel uncertainty region is assumed to be an ellipsoid \cite{Lorenz2005} whose geometry is known at the transmitter. Robust beamforming \cite{Li2006} takes into account the worst-case channel estimate in the uncertainty region. We derive the worst-case power gains at the receivers and formulate accordingly the worst-case achievable rates for the links. The robust rate region is the set of all worst-case jointly achievable rate tuples. In order to characterize the Pareto optimal points of the robust rate region, we show that the objective of each transmitter is a tradeoff between maximizing the intended worst-case power gain and minimizing the worst-case interference gains. This multi-objective problem is cast as a second order cone program (SOCP) which can be solved efficiently. Consequently, we characterize the beamforming vectors that are necessary to achieve all Pareto optimal points in the robust rate region. Moreover, these beamforming vectors are parameterized by $K(K-1)$ real-valued parameters taking values between zero and one. Afterwards, we analyze the spectral efficiency of the system for asymptotic values of signal-to-noise ratio (SNR). At high SNR, achieving full multiplexing gain with zero forcing transmission requires the channel estimation error to reduce linearly with SNR. If the error is independent of SNR, then single-user transmission is optimal. At low SNR, it is shown that joint maximum ratio transmission is optimal to minimize the minimum energy per bit for reliable communication.

The paper is organized as follows. In Section \ref{sec:sysmodel}, we describe the system model as well as the channel uncertainty model. The worst-case power gains at the receivers are derived and the worst-case achievable rates for the links are formulated. In Section \ref{sec:Pareto}, we characterize the beamforming vectors that are necessary to achieve Pareto optimal points in the robust rate region. As a special case, the two-user MISO IFC with spherical channel uncertainty is analyzed. In Section \ref{sec:HighLowSNR}, spectral efficiency in the high and low SNR regime are studied. In Section \ref{sec:conclusion}, we draw the conclusions.

\subsection*{Notations}
Column vectors and matrices are given in lowercase and uppercase boldface letters, respectively. $\norm{\mat{a}}$ is the Euclidean norm of $\mat{a} \in \mathbb{C}^{N}$. $\abs{b}$ is the absolute value of $b \in \mathbb{C}$. $(\cdot)^T$ and $(\cdot)^\H$ denote transpose and Hermitian transpose, respectively. The orthogonal projector onto the column space of $\mat{Z}$ is $\mat{\Pi}_{Z} := \mat{Z}(\mat{Z}^\H\mat{Z})^{-1}\mat{Z}^\H$. The orthogonal projector onto the orthogonal complement of the column space of $\mat{Z}$ is $\mat{\Pi}_{Z}^{\perp} := \bI - \mat{\Pi}_{Z}$, where $\bI$ is an identity matrix. $\mathcal{CN}(0,\mat{A})$ denotes a circularly-symmetric Gaussian complex random vector with covariance matrix $\mat{A}$. $(a)_+$ denotes $\max(a,0)$. $Re(a)$, $Im(a)$ and $\angle(a)$ denote the real part, imaginary part, and the phase of a complex number $a$.

\section{System Model}\label{sec:sysmodel}

Consider a $K$-user MISO IFC. Each transmitter has an intended receiver, and the signal from a transmitter to an unintended receiver is treated as interference. Each transmitter $k$ is equipped with $N_k$ antennas while all receivers use single antennas. The quasi-static block flat-fading channel vector from transmitter $k$ to receiver $\ell$ is denoted by $\bh_{k\ell} \in \mathbb{C}^{N_k}$. We assume that transmission consists of scalar coding followed by beamforming. The beamforming vector used by transmitter $k$ is $\mat{w}_{k} \in \mathbb{C}^{N_k}$. Each transmitter $k$ has a total power constraint of $P_k$ such that $\snorm{\mat{w}_{k}} \leq P_k$. The matched-filtered, symbol-sampled complex baseband data received at receiver $\ell$ is
\begin{equation}
y_\ell = \sum\nolimits_{k = 1}^K \bh_{k\ell}^\H \bw_k s_k + n_\ell,
\end{equation}
\noindent where $s_k \sim \mathcal{CN}(0,1)$ is the symbol transmitted by transmitter $k$. The random variables $n_k \sim \mathcal{CN}(0,\sigma^2)$ are additive Gaussian noise.

We assume that perfect CSI is present at the receivers while at the transmitters the channels are not perfectly known. A transmitter has estimates of the true channels between itself and all receivers, and we assume that the error in the channel estimation vectors is bounded in an uncertainty region. The ellipsoidal channel uncertainty region is described next.
\subsection{Channel Uncertainty Model}

Let $\ebh_{k\ell}$ be the estimate of the true channel vector $\bh_{k\ell}$ at transmitter $k$. The uncertainty in the channel estimate can be modeled using a \emph{channel estimation error vector} \cite{Bengtsson2001}
\begin{equation}
\mat{\delta}_{k\ell} = \bh_{k\ell} - \ebh_{k\ell}, \quad \mat{\delta}_{k\ell} \in \mathcal{E}_{k\ell},
\end{equation}
\noindent where $\mat{\delta}_{k\ell}$ is assumed to be bounded in an \emph{ellipsoidal uncertainty region} $\mathcal{E}_{k\ell}$ defined as \cite{Lorenz2005}\footnote{In \cite{Lorenz2005}, the ellipsoidal uncertainty region is defined as $\mathcal{E}_{k\ell}=\{\mat{A}_{k\ell} \mat{\delta} + \mat{c}_{k\ell}:\norm{\mat{\delta}} \leq \epsilon_{k\ell}\}$, where $\mat{c}_{k\ell}$ defines the center of the ellipsoid $\mathcal{E}_{k\ell}$. Here, we assume $\mat{c}_{k\ell} = \mat{0}$ without loss of generality.}
\begin{equation}\label{eq:ellipsoid}
\mathcal{E}_{k\ell} = \{\mat{A}_{k\ell} \mat{\delta}:\norm{\mat{\delta}} \leq \epsilon_{k\ell}\}.
\end{equation}
\noindent In \eqref{eq:ellipsoid}, $\mat{A}_{k\ell} \in \mathbb{C}^{N_k \times N_k}$ determines the shape of the ellipsoid. We assume that $\mathcal{E}_{k\ell}$ is full rank for all $k,\ell = 1,\ldots,K,$ and has a maximum radius of $\epsilon_{k\ell}$, i.e. the largest singular value of $\mat{A}_{k\ell}$ is constrained to one. Generally, the shape and size of the ellipsoid in \eqref{eq:ellipsoid} should be specified according to the error probability in channel estimation for the system under consideration. For example, if channel vector estimation is done at a transmitter using training-sequences from the corresponding receiver, the size of the uncertainty region can be chosen as a scaled version of the channel estimation mean square error (MSE) \cite{Zheng2008,Bjornson2011}.

The elliptical uncertainty model described in \eqref{eq:ellipsoid} is more general and encompasses the spherical uncertainty model adopted in \cite{Vorobyov2003}. The spherical uncertainty region is defined as
\begin{equation}\label{eq:sphere}
\mathcal{D}_{k\ell} = \{\mat{\delta}:\norm{\mat{\delta}} \leq \epsilon_{k\ell}\}.
\end{equation}
\noindent Specifically, $\mathcal{E}_{k\ell}$ in \eqref{eq:ellipsoid} is a sphere when $\mat{A}_{k\ell} = \mat{I}$.

A transmitter $k$ knows the channel estimates $\ebh_{k\ell}$ and the associated uncertainty region $\mathcal{E}_{k\ell}$. Robust transmission requires the choice of beamforming vectors to be robust to channel estimation errors. Thus, a transmitter has to consider worst-case achievable rate at its intended receiver for secure communication.

\subsection{Worst-Case Achievable Rates}
The worst-case achievable rate for link $\ell$ with single-user decoding is
\begin{equation}\label{eq:rate}
R_\ell(\bw_1,\ldots,\bw_K) = \ld\pp{1 + \frac{x^2_{\ell \ell}(\bw_\ell)}{\sigma^2 + \sum_{k\neq \ell} x_{k\ell}^2(\bw_k)}},
\end{equation}
\noindent where $x^2_{k \ell}(\bw_k)$ is the worst-case signal power from transmitter $k$ to receiver $\ell$. The worst-case intended power gain $x^2_{\ell\ell}(\bw_\ell)$ is the least power gain achievable within the uncertainty set $\mathcal{E}_{\ell\ell}$. That is,
\begin{subequations}\label{eq:intendedgain}
\begin{align}
    x_{\ell\ell}(\bw_\ell) &= \min_{\mat{\delta}_{\ell\ell} \in \mathcal{E}_{\ell\ell}} \abs{\bh_{\ell\ell}^H \bw_\ell} \quad \text{with } \bh_{\ell\ell} = \ebh_{\ell\ell} + \mat{\delta}_{\ell\ell} \\
    &= \min_{\mat{\delta}_{\ell\ell} \in \mathcal{E}_{\ell\ell}} \abs{\ebh_{\ell\ell}^H \bw_\ell + \mat{\delta}_{\ell\ell}^H \bw_\ell}\\
    & = \min_{\norm{\mat{\delta}}\leq \epsilon_{\ell\ell}} \abs{\ebh_{\ell\ell}^H \bw_\ell + \mat{\delta}^H \mat{A}_{\ell\ell}^H \bw_\ell}.
\end{align}
\end{subequations}
The error vector which minimizes the intended power gain in \eqref{eq:intendedgain} can be calculated following similar steps as in \cite{Vorobyov2003}. Using the triangle inequality \cite[3.2.5]{Abramowitz1972}, we have
\begin{equation}\label{eq:triangle}
    \abs{\ebh_{\ell\ell}^H \bw_\ell + \mat{\delta}^H \mat{A}_{\ell\ell}^H \bw_\ell} \geq \begin{cases} 0 & \text{if } \abs{\ebh_{\ell\ell}^H \bw_\ell} < \abs{\mat{\delta}^H \mat{A}_{\ell\ell}^H \bw_\ell} \\ \abs{\ebh_{\ell\ell}^H \bw_\ell} - \abs{\mat{\delta}^H \mat{A}_{\ell\ell}^H \bw_\ell} & \text{otherwise}
    \end{cases},
\end{equation}
\noindent for all $\mat{\delta}$ such that $\norm{\mat{\delta}}\leq \epsilon_{\ell\ell}$. In \eqref{eq:triangle}, the lower bound of zero for $\abs{\ebh_{\ell\ell}^H \bw_\ell} < \abs{\mat{\delta}^H \mat{A}_{\ell\ell}^H \bw_\ell}$ is achievable by a negative linear scaling of $\mat{\delta}$. Otherwise, if $\abs{\ebh_{\ell\ell}^H \bw_\ell} \geq \abs{\mat{\delta}^H \mat{A}_{\ell\ell}^H \bw_\ell}$, then the error vector which achieves the lower bound in \eqref{eq:triangle} is
\begin{equation}\label{eq:triangle1}
    \mat{\delta} = - \epsilon_{\ell\ell} \frac{\mat{A}_{\ell\ell}^H \bw_\ell}{\norm{\mat{A}_{\ell\ell}^H \bw_\ell}} e^{j \angle(\ebh_{\ell\ell}^H \bw_\ell)}.
\end{equation}
\noindent Substituting \eqref{eq:triangle1} in the RHS of \eqref{eq:triangle}, the square root of the worst-case intended power gain in \eqref{eq:intendedgain} reduces to
\begin{equation}\label{eq:dir_gain}
     x_{\ell\ell}(\bw_\ell) = \pp{\abs{\ebh_{\ell\ell}^H \bw_\ell} - \norm{\mat{A}^H_{\ell\ell} \bw_\ell}\epsilon_{\ell\ell}}_+.
\end{equation}
The square root of the worst-case interference power gain from transmitter $k$ to receiver $\ell, k \neq \ell$, is
\begin{subequations}\label{eq:interferencegain}
\begin{align}
    x_{k\ell}(\bw_k) &= \max_{\mat{\delta}_{k\ell} \in \mathcal{E}_{k\ell}} \abs{\bh_{k\ell}^H \bw_k} \quad \text{with } \bh_{k\ell} = \ebh_{k\ell} + \mat{\delta}_{k\ell} \\
    &= \max_{\mat{\delta}_{k\ell} \in \mathcal{E}_{k\ell}} \abs{\ebh_{k\ell}^H \bw_k + \mat{\delta}_{k\ell}^H \bw_k}\\
    & = \max_{\norm{\mat{\delta}}\leq \epsilon_{k\ell}} \abs{\ebh_{k\ell}^H \bw_k + \mat{\delta}^H \mat{A}_{k\ell}^H \bw_k}.
\end{align}
\end{subequations}
Using the triangle inequality, we have
\begin{equation}\label{eq:triangle_int}
    \abs{\ebh_{k\ell}^H \bw_k + \mat{\delta}^H \mat{A}_{k\ell}^H \bw_k} \leq \abs{\ebh_{k\ell}^H \bw_k} + \abs{\mat{\delta}^H \mat{A}_{k\ell}^H \bw_k},
\end{equation}
\noindent and the upper bound in \eqref{eq:triangle_int} is achieved by
\begin{equation}
    \mat{\delta} = \epsilon_{k\ell} \frac{\mat{A}_{k\ell}^H \bw_k}{\norm{\mat{A}_{k\ell}^H \bw_k}} e^{j \angle ({\ebh}_{k\ell}^H \bw_k)}.
\end{equation}
\noindent Hence, the worst-case interference gain is
\begin{equation}\label{eq:int_gain}
     x_{k\ell}(\bw_k) = \abs{\ebh_{k\ell}^H \bw_k} + \norm{\mat{A}^H \bw_k}\epsilon_{k\ell}.
\end{equation}

Note that the worst-case measures in \eqref{eq:dir_gain} and \eqref{eq:int_gain} are the main terms in the worst-case rate expression in \eqref{eq:rate}. Also, the channel estimation errors of all channel vectors are independent such that the worst-case interference and intended power gains may occur simultaneously.
\section{Pareto Optimal Beamforming}\label{sec:Pareto}

The $K$-dimensional \emph{robust rate region} consists of all jointly achievable worst-case rate tuples defined as
\begin{equation}\label{eq:RR}
\mathcal{R} := \{\pp{R_1(\bw_1,\ldots,\bw_K),\ldots,R_K(\bw_1,\ldots,\bw_K)}: \snorm{\bw_k}\leq P_k, k = 1,\ldots,K\},
\end{equation}
\noindent where $R_k(\bw_1,\ldots,\bw_K)$ is defined in \eqref{eq:rate}. The region $\mathcal{R}$ in \eqref{eq:RR} is not necessarily a convex set\footnote{Note that any point on the convex hull of $\mathcal{R}$ can be reached by time-sharing. In this work, we do not consider time-sharing techniques to convexify the rate region $\mathcal{R}$.} \cite{Charafeddine2011}. This is mainly due to the interference coupling present between the links. The outer boundary of the rate region comprises efficient operating points. These points are called Pareto optimal.

\begin{definition}\label{def:Pareto}
A rate tuple $(R_1,...,R_K) \in \mathcal{R}$ is \emph{Pareto optimal} if there is no other tuple $({R'}_1,...,{R'}_K) \in \mathcal{R}$ such that $({R'}_1,...,{R'}_K) \geq (R_1,...,R_K)$, where the inequality is component-wise and strict for at least one component. The set of all Pareto optimal operating points constitutes the \emph{Pareto boundary} of $\mathcal{R}$.
\end{definition}

In order to achieve a Pareto optimal point, each beamforming vector must include a tradeoff between the maximization of the intended power gain and the minimization of the interference power gains. Next, we will consider the special two-user MISO IFC case and compare Pareto efficient beamforming with perfect and imperfect CSI at the transmitters. Also as a special case, the channel uncertainty regions are considered to be spherical. Later, the general $K$-user MISO IFC with elliptical channel uncertainty region will be restored.

\subsection{Two-User Case with Spherical Channel Uncertainty}
In the two-user case, the rate region in \eqref{eq:RR} is two-dimensional rewritten as
\begin{equation}\label{eq:RR2}
\mathcal{R} := \{\pp{R_1(\bw_1,\bw_2),R_2(\bw_1,\bw_2)}: \snorm{\bw_k}\leq P_k, k = 1,2\},
\end{equation}
\noindent where the achievable rate for link $\ell$ is
\begin{equation}\label{eq:rate1}
R_\ell(\bw_1,\bw_2) = \ld\pp{1 + \frac{\sabs{\bh_{\ell\ell}^H \bw_\ell}}{\sigma^2 + \sabs{\bh_{k\ell}^H \bw_k}}}, \quad k\neq \ell.
\end{equation}
If the transmitter knows the channel vectors perfectly, the beamforming vectors that are necessary to achieve all Pareto optimal points of the two-user rate region in \eqref{eq:RR2} are \cite{Jorswieck2008a}
\begin{equation}\label{eq:ParOptBeam_perfect}
    \bw_k(\alpha_k) =  \sqrt{P_k} \pp{\sqrt{\alpha_k} \frac{\Pi_{\bh_{k\ell}} \bh_{kk}}{\norm{\Pi_{\bh_{k\ell}} \bh_{kk}}} + \sqrt{1 - \alpha_k} \frac{\Pi^\perp_{\bh_{k\ell}} \bh_{kk}}{\norm{\Pi^\perp_{\bh_{k\ell}} \bh_{kk}}}}, \quad k\neq \ell,
\end{equation}
\noindent where $\alpha_k \in [0,\alpha_k^\mrt]$, with $\alpha_k^\mrt = \norm{\Pi_{\bh_{k\ell}}\bh_{kk}}^2/\norm{\bh_{kk}}^2$. In \eqref{eq:ParOptBeam_perfect}, the beamforming vectors that achieve Pareto optimal points are a combination of two orthogonal unit norm vectors. Zero forcing (ZF) to the unintended receiver corresponds to $\alpha_k = 0$, and maximum ratio transmission (MRT) to $\alpha_k = \alpha_k^\mrt$.

If the transmitters have imperfect CSI and the channel vector uncertainty region is spherical as defined in \eqref{eq:sphere}, then the worst-case achievable rate of link $\ell$ is
\begin{equation}\label{eq:rate2}
R_\ell(\bw_1,\bw_2) = \ld\pp{1 + \frac{\pp{\abs{\ebh_{\ell\ell}^H \bw_\ell} - \norm{\bw_\ell}\epsilon_{\ell\ell}}_+^2}{\sigma^2 + \pp{\abs{\ebh_{k\ell}^H \bw_k} + \norm{\bw_k}\epsilon_{k\ell}}^2}}, \quad k\neq \ell,
\end{equation}
\noindent where the intended and interference power gains are from \eqref{eq:dir_gain} and \eqref{eq:int_gain}, respectively, with $\mat{A}_{k\ell} = \mat{I}$. According to the rate expression in \eqref{eq:rate2}, worst-case signal power and interference powers include additive terms influenced only by the norm of the beamforming vectors. It is thus expected that robust Pareto optimal beamforming includes additionally varying transmission power. The Pareto boundary of the rate region $\mathcal{R}$ in \eqref{eq:RR2} is achieved by the beamforming vectors \cite{Mochaourab2011b}
\begin{equation}\label{eq:ParOptBeam_imperfect}
    \bw_k(\xi_k,\beta_k) =  \sqrt{ \xi_k P_k} \pp{\sqrt{\beta_k} \frac{\Pi_{\ebh_{k\ell}} \ebh_{kk}}{\norm{\Pi_{\ebh_{k\ell}} \ebh_{kk}}} + \sqrt{1 - \beta_k} \frac{\Pi^\perp_{\ebh_{k\ell}} \ebh_{kk}}{\norm{\Pi^\perp_{\ebh_{k\ell}} \ebh_{kk}}}}, \quad k \neq \ell,
\end{equation}
\noindent where $\xi_k \in [0,1]$, and $\beta_k \in [0,\beta_k^\rmrt]$, with $\beta_k^\rmrt = \norm{\Pi_{\ebh_{k\ell}}\ebh_{kk}}^2/\norm{\ebh_{kk}}^2$. The parametrization in \eqref{eq:ParOptBeam_imperfect} requires two parameters per transmitter. One parameter is to alter the direction of the beamforming vector and one parameter scales the transmission power. Interestingly, the structure of the efficient beamforming vectors with spherical channel uncertainty in \eqref{eq:ParOptBeam_imperfect} is similar to the case of perfect CSI at the transmitters in \eqref{eq:ParOptBeam_perfect}. In \eqref{eq:ParOptBeam_imperfect}, the channel estimates replace the true channel vectors in \eqref{eq:ParOptBeam_perfect}. The beamforming vectors corresponding to MRT and ZF represent extreme strategies which have the objective of either maximizing the power at the intended receiver or minimizing the interference power gain. Robust MRT is calculated as
\begin{equation}\label{eq:R-MRT}
    \bw_k^\rmrt = \argmax_{\norm{\bw_k} \leq \sqrt{P_k}} \quad \abs{\ebh_{kk}^H \bw_k} - \norm{\bw_k} \epsilon_{kk} = \sqrt{P_k} \frac{\ebh_{kk}}{\norm{\ebh_{kk}}}.
\end{equation}
\noindent Thus, to maximize the power gain at the intended receiver in the worst-case of spherical uncertainty, the transmitter chooses full power transmission in the direction of the estimated channel. In \cite{Wiesel2007}, it is shown that robust MRT corresponds to the channel vector estimate whenever the uncertainty region is symmetric. Robust ZF, however is achieved by allocating zero power. This is observed in the denominator of \eqref{eq:rate2} where the interference gain can only be zero for $\norm{\bw_k} = 0$.

Next, we address the general $K$-user case with general ellipsoidal uncertainty regions defined in \eqref{eq:ellipsoid}.

\subsection{$K$-User Case with Ellipsoidal Channel Uncertainty}\label{sec:K-User}

As in the two-user case, Pareto optimal beamforming in the $K$-user case requires a tradeoff between maximizing intended power gain and minimizing interference gains. This is concluded by observing that the worst-case achievable rate in \eqref{eq:rate} is monotonically increasing with the direct power gain $x^2_{\ell\ell}(\bw_\ell)$ for fixed interference power $x^2_{k\ell}(\bw_k), k \neq \ell$. In addition, the worst-case achievable rate is monotonically decreasing with the interference power gain for fixed intended power gain. In other words, increasing the power gain at the intended receiver increases its achievable rate and reducing the interference at unintended receivers increases the rates of the other links. For a transmitter $k$, the tradeoff between the two objectives of increasing  $x^2_{kk}(\bw_k)$ and reducing $x^2_{k\ell}(\bw_k), k \neq \ell,$ can be cast as a multi-objective optimization problem \cite{Marler2004,Zhang2007}
\begin{equation}\label{eq:multobj}
\bw_k = \argmax~ \mat{g}_k(\bw_k) \quad \text{ s.t. } \snorm{\bw_k} \leq P_k,
\end{equation}
\noindent where the multi-objective function $\mat{g}_k: \mathbb{C}^{N_k}\rightarrow \mathbb{R}^K$ is defined as
\begin{equation}
{g}_{k\ell}(\bw_k) =
\left\{
  \begin{array}{ll}
    -x^2_{k\ell}(\bw_k), & k \neq \ell; \\
    +x^2_{kk}(\bw_k), & k = \ell.
  \end{array}
\right.
\end{equation}
There are several methods to solve the problem in \eqref{eq:multobj}. One technique is the weighted sum method \cite{LeyfferSpring2009}
\begin{equation}\label{eq:weightedMOP}
\bw_k(\mat{\gamma}_k) = \argmax~ \gamma_{kk} {x}^2_{kk}(\bw_k) - \sum_{\ell \neq k} \gamma_{k\ell}{x}^2_{k\ell}(\bw_k) \quad \text{ s.t. } \snorm{\bw_k} \leq P_k,
\end{equation}
\noindent where $\gamma_{k\ell}$ are nonnegative weights such that $\sum_{\ell = 1}^K \gamma_{k\ell} = 1$. In the $K$-user MISO IFC with perfect CSI at the transmitters, the beamforming vectors necessary to achieve all Pareto optimal points are found in \cite{Mochaourab2011} by the optimization problem in \eqref{eq:weightedMOP}. For the perfect CSI case, the objective in \eqref{eq:weightedMOP} is a sum of Hermitian forms and hence the optimization is an eigenvalue problem.

Another method for solving \eqref{eq:multobj} is by maximizing a single objective in $\mat{g}_k(\bw_k)$ while setting goals for the other objectives \cite{LeyfferSpring2009}. This method is called the boundary intersection approach \cite{Zhang2007} and casts the problem in \eqref{eq:multobj} as
\begin{subequations}\label{eq:goalProgramm}
\begin{align}
\bw_k(\mat{\lambda}_k) = \argmax & \quad x^2_{kk}(\bw_k)\\
\text{s.t.} & \quad x^2_{k\ell} \leq \lambda_{k\ell} \Gamma_{k\ell}, ~ \text{for all } \ell \neq k,\\
& \quad \snorm{\bw_k} \leq {P_k},
\end{align}
\end{subequations}
\noindent with $\lambda_{k\ell} \in [0,1]$ and $\Gamma_{k\ell}$ is a fixed value corresponding to an upper limit on the interference level generated by transmitter $k$. The formulation in \eqref{eq:goalProgramm} is common for optimization problems in cognitive radio networks where a secondary user maximizes his intended power gain or achievable rate subject to interference temperature constraints at the primary receivers \cite{Zhang2009,Zheng2010,Wang2011}. Also, in \cite{Zhang2010} the necessary beamforming vectors to achieve all Pareto optimal points in the MISO IFC with perfect CSI are found by maximizing the achievable rate while setting interference constraints at the unintended receivers.

In order to solve the multi-objective optimization problem in \eqref{eq:multobj} with channel uncertainty at the transmitters, we use the method in \eqref{eq:goalProgramm} and show that it leads to a convex optimization problem \cite{Boyd2004}. By taking the square root of the objective function and the constraints and substituting the power gains from \eqref{eq:dir_gain} and \eqref{eq:int_gain} in \eqref{eq:goalProgramm} we get
\begin{subequations}\label{eq:ParOptBeam0}
\begin{align}
\bw_k(\mat{\lambda}_k) = \argmax & \quad \abs{\ebh_{k k}^\H \bw_k} - \norm{\bA_{kk}^\H \bw_k}\epsilon_{kk}\\
\text{s.t.} & \quad \abs{\ebh_{k \ell}^\H \bw_k} + \norm{\bA_{k \ell}^\H \bw_k}\epsilon_{k\ell} \leq \sqrt{\lambda_{k \ell} \Gamma_{k \ell}},\\ \nonumber & \quad \ell = 1,\ldots,K, ~ \ell \neq k,\\
& \quad \norm{\bw_k} \leq \sqrt{P_k}.
\end{align}
\end{subequations}
\noindent Since the objective function does not depend on the phase of the beamforming vector $\bw_k$ \cite{Vorobyov2003}, the problem in \eqref{eq:ParOptBeam0} can be equivalently written as
\begin{subequations}\label{eq:ParOptBeam}
\begin{align}
\bw_k(\mat{\lambda}_k) = \argmax & \quad \text{Re}\br{\ebh_{k k}^\H \bw_k} - \norm{\bA_{kk}^\H \bw_k}\epsilon_{kk}\\
\text{s.t.} & \quad \abs{\ebh_{k \ell}^\H \bw_k} + \norm{\bA_{k \ell}^\H \bw_k}\epsilon_{k\ell} \leq \sqrt{\lambda_{k \ell} \Gamma_{k \ell}},\\ \nonumber
& \quad \ell = 1,\ldots,K, ~ \ell \neq k,\\
& \quad \text{Im}\br{\ebh_{k k}^\H \bw_k}=0,\\
& \quad \norm{\bw_k} \leq \sqrt{P_k},
\end{align}
\end{subequations}
\noindent where $\lambda_{k\ell} \in [0,1]$. The problem in \eqref{eq:ParOptBeam} is a second order cone program (SOCP) \cite{Lobo1998} which can be efficiently solved by interior point methods \cite{Boyd2004} (An optimization package that solves SOCPs is SeDuMi \cite{Sturm1999}.)

In the optimization problem \eqref{eq:ParOptBeam}, the parameter $\lambda_{k\ell}$ determines the interference level that transmitter $k$ is allowed to generate at an unintended receiver $\ell$. For $\lambda_{k\ell} = 1$, the highest interference level is allowed. For the constraints $\Gamma_{k \ell}$ with $\lambda_{k\ell}=1$ to be tight, $\Gamma_{k \ell}$ should be the interference level at receiver $\ell$ when transmitter $k$ performs MRT, i.e.,
\begin{equation}
\Gamma_{k\ell} = x^2_{k\ell}(\bw_k^\rmrt) = \pp{\abs{\ebh_{k \ell}^\H \bw_k^\rmrt} + \norm{\bA_{k \ell}^\H \bw_k^\rmrt}\epsilon_{k\ell}}^2,
\end{equation}
\noindent with MRT beamforming obtained from
\begin{subequations}\label{eq:MRT_rob}
\begin{align}
\bw_k^\rmrt = \argmax & \quad \text{Re}\br{\ebh_{k k}^\H \bw_k} - \norm{\bA_{kk}^\H \bw_k} \epsilon_{kk}\\
\text{s.t.} & \quad \text{Im}\br{\ebh_{k k}^\H \bw_k}=0,\\
& \quad \norm{\bw_k} \leq \sqrt{P_k}.
\end{align}
\end{subequations}
\noindent Another method for calculating robust MRT is provided in \cite{Wiesel2007} which uses a one dimensional line search.
\begin{theorem}\label{thm:ParetoOpt}
All Pareto optimal points of the robust rate region $\mathcal{R}$ in \eqref{eq:RR} can be achieved by the beamforming vectors $\bw_k(\mat{\lambda}_k)$ from \eqref{eq:ParOptBeam} with $\mat{\lambda}_k \in [0,1]^{K-1}$ for $k = 1,...,K$.
\end{theorem}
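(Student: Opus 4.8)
The plan is to establish the necessity of the characterization by contradiction, exploiting the monotonicity of the worst-case rate in \eqref{eq:rate}. Let $(\bw_1^\star,\ldots,\bw_K^\star)$ attain an arbitrary Pareto optimal point and write $I_{k\ell}:=x^2_{k\ell}(\bw_k^\star)$ for the worst-case interference that transmitter $k$ generates at receiver $\ell\neq k$. Freezing every $\bw_j^\star$ with $j\neq k$, I would first show that $\bw_k^\star$ must solve $\max\, x^2_{kk}(\bw_k)$ subject to $x^2_{k\ell}(\bw_k)\leq I_{k\ell}$ for all $\ell\neq k$ and $\snorm{\bw_k}\leq P_k$. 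Indeed, any feasible $\bw_k'$ with a strictly larger intended gain would, upon replacing $\bw_k^\star$, strictly increase $R_k$ (its interference terms being unchanged) while leaving each $x^2_{k\ell}(\bw_k')\leq I_{k\ell}$ and hence not decreasing any $R_\ell$, $\ell\neq k$, contradicting Pareto optimality. Since $x^2_{kk}$ and the constraint functions depend on $\bw_k$ only through the worst-case gains \eqref{eq:dir_gain} and \eqref{eq:int_gain}, this is precisely the boundary-intersection program \eqref{eq:goalProgramm}, equivalently the SOCP \eqref{eq:ParOptBeam}, with threshold values $\lambda_{k\ell}\Gamma_{k\ell}=I_{k\ell}$.

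It then remains to exhibit admissible parameters. Setting $\lambda_{k\ell}:=I_{k\ell}/\Gamma_{k\ell}$ for each ordered pair $(k,\ell)$, $\ell\neq k$, produces the required $K(K-1)$ real scalars, and by the previous paragraph $\bw_k^\star$ is an optimizer of \eqref{eq:ParOptBeam} for this $\mat{\lambda}_k$. Optimizers that are only weakly efficient can be replaced by genuinely efficient ones achieving the same rate tuple, so no generality is lost. The phase-invariance used to pass from \eqref{eq:ParOptBeam0} to \eqref{eq:ParOptBeam} and the fact that \eqref{eq:ParOptBeam} is a convex SOCP are already established in the text and may be invoked directly.

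The main obstacle is to show that these parameters obey $\lambda_{k\ell}\in[0,1]$, i.e. that at a Pareto optimal point no transmitter generates more worst-case interference than under robust MRT, $I_{k\ell}\leq\Gamma_{k\ell}=x^2_{k\ell}(\bw_k^\rmrt)$. The natural argument is that $\bw_k^\rmrt$ \emph{globally} maximizes the intended gain \eqref{eq:MRT_rob}: if $I_{k\ell}$ exceeded $\Gamma_{k\ell}$, steering $\bw_k^\star$ toward $\bw_k^\rmrt$ would not lower $x^2_{kk}$ yet would strictly reduce the interference at receiver $\ell$, which ought to improve the operating point. In the two-user case this is immediate, because the efficient beamformers lie on the single robust MRT--ZF segment \eqref{eq:ParOptBeam_imperfect}, along which the interference gain is monotone and capped by its MRT value. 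For $K\geq 3$ the subtlety is that replacing $\bw_k^\star$ by $\bw_k^\rmrt$ may \emph{raise} the interference at a third receiver, so a pure MRT swap need not be a Pareto improvement; one must instead rule out ``interference-favoring'' efficient beamformers directly. I would attack this using the KKT conditions and the concavity of the value function of the convex program \eqref{eq:ParOptBeam} in its interference budgets, constructing a single-transmitter deviation that stays on transmitter $k$'s own efficient frontier while lowering $I_{k\ell}$ to $\Gamma_{k\ell}$. Making this reduction rigorous, rather than relying on the clean two-user monotonicity, is where I expect the real work to lie.
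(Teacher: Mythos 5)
Your opening paragraph is, almost verbatim, the paper's entire proof: assume a Pareto optimal point is achieved by some $\bw_k$ that does not solve \eqref{eq:ParOptBeam}; then a solution $\bw'_k$ of \eqref{eq:ParOptBeam}, with interference budgets set to the levels $x^2_{k\ell}(\bw_k)$ actually generated, strictly increases the intended gain $x^2_{kk}$ without raising any $x^2_{k\ell}$, $\ell \neq k$, so $R_k$ grows while no other rate falls, contradicting Definition \ref{def:Pareto}. The paper does not even make the identification $\lambda_{k\ell}\Gamma_{k\ell} = x^2_{k\ell}(\bw_k^\star)$ explicit, as your second paragraph does, so on this portion your write-up is more careful than the published argument, not less.

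The difficulty you isolate in your third paragraph is genuine, and you should know that the paper does not address it at all: requiring $\lambda_{k\ell}\in[0,1]$ amounts to the claim that at every Pareto optimal point $x^2_{k\ell}(\bw_k^\star) \leq \Gamma_{k\ell} = x^2_{k\ell}(\bw_k^\rmrt)$, and the only support the paper offers is the informal remark preceding the theorem that $\Gamma_{k\ell}$ ``should be'' the robust-MRT interference level so that $\lambda_{k\ell}=1$ is tight; the proof itself never verifies that the budgets arising at a Pareto optimum lie below these caps. Your diagnosis of why the naive MRT swap fails for $K \geq 3$ is exactly right: the constraints toward other receivers can push the optimizer of \eqref{eq:ParOptBeam} in a direction whose interference at a third receiver exceeds its MRT level, and if that constraint is active with positive multiplier it cannot be tightened for free, so the deviation you would need is not a unilateral Pareto improvement. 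Your KKT/concave-value-function plan is the natural route to a rigorous closure, but note also the cheap repair: redefining $\Gamma_{k\ell} := \max_{\snorm{\bw_k}\leq P_k} x^2_{k\ell}(\bw_k)$ makes $\lambda_{k\ell}\in[0,1]$ trivially cover every attainable budget, at the price of a looser parametrization. In short, your proposal subsumes the paper's proof, and what you flag as ``the real work'' is a step the published argument silently leaves open rather than a gap peculiar to your attempt.
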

\begin{proof}
The proof is by contradiction. Assume a beamforming vector $\bw_k$ which is not a solution of the problem in \eqref{eq:ParOptBeam} achieves a Pareto optimal point $(R_1,\ldots,R_K)$. Then, it is possible to find a beamforming vector $\bw'_k$ from \eqref{eq:ParOptBeam} in which the intended power gain increases at receiver $k$, i.e. $x^2_{kk}(\bw'_k) > x^2_{kk}(\bw_k)$, without affecting the interference gain to all other receivers, i.e. $x^2_{k\ell}(\bw'_k) = x^2_{k\ell}(\bw_k)$ for all $\ell \neq k$. In this case the achievable rate of link $k$ is increased without affecting the rates of the other receivers. According to Definition \ref{def:Pareto}, $(R_1,\ldots,R_K)$ would not be Pareto optimal which is a contradiction to the original assumption. Hence, any beamforming vector not in the solution set of \eqref{eq:ParOptBeam} does not achieve a Pareto optimal point in the robust rate region $\mathcal{R}$. $\hfill\Box$
\end{proof}

The Pareto boundary of the robust rate region can be also obtained by the framework provided in \cite{Bjornson2011}. The approach in \cite{Bjornson2011} uses a robust fairness-profile optimization to calculate a Pareto optimal point which entails solving a set of convex feasibility problems. Accordingly, only points on the Pareto boundary are delivered requiring $K-1$ real-valued parameters. In comparison to the parametrization in \cite{Bjornson2011}, our result in Theorem \ref{thm:ParetoOpt} specifies the beamforming vectors for each transmitter that are necessary to achieve all Pareto optimal points in the robust rate region. The efficient beamforming vectors for each transmitter are parameterized by $K-1$ real-valued parameters each between zero and one. The parametrization of the efficient beamforming vectors can be utilized for designing efficient low complexity distributed resource allocation schemes which require low signaling overhead between the transmitters as is done in \cite{Ho2008,Mochaourab2010}.

\begin{figure}[h]
\centering
[Figure 1 about here]
\end{figure}

In \figurename~\ref{fig:rateregion}, a three-user robust rate region is plotted. The total power constraints at the transmitters are set to one and the SNR is $1/\sigma^2 = 0$ dB. The number of antennas at each transmitter is three. The channel vector estimates $\ebh_{k\ell}$ are independent and identically distributed as $\ebh_{k\ell} \sim \mathcal{CN}(0,\mat{I})$. Each ellipsoidal uncertainty region is generated as follows: $N_k$ vectors are generated according to complex normal distribution with zero mean and covariance matrix $\mat{I}$. $\mat{A}_{k\ell}$ is constructed by concatenating the generated vectors and normalizing the constructed matrix such that the largest singular value of $\mat{A}_{k\ell}$ is one. The errors are equally chosen as $\epsilon_{k\ell} = 0.5$ for $k,\ell = 1,2,3$. For each transmitter $k$, we generate the efficient beamforming vectors from \eqref{eq:ParOptBeam} with the parameters $\lambda_{k\ell}$ uniformly sampled in a $0.05$ step-length between zero and one. A set of rate tuples is calculated from the generated beamforming vectors of the transmitters. Within this set, a subset corresponds to Pareto optimal points of the robust rate region. In \figurename~\ref{fig:rateregion}, only the Pareto optimal points are plotted by utilizing the MATLAB code in \cite{Cao2007}.

\section{Spectral Efficiency at High and Low SNR}\label{sec:HighLowSNR}
In this section, optimal beamforming is studied for asymptotic values of SNR. Moreover, we analyze the effects of imperfect CSI at the transmitters on the performance of the system in comparison to the case of perfect CSI.

\subsection{Efficiency at High SNR}
The quantitative performance is analyzed using the high-SNR offset concept in \cite[Section $\mathrm{II}$]{Lozano2005}. We define the SNR as $\rho=1/\sigma^2$ and the maximum sum rate as a function of SNR as $R^\tsum(\rho)$. The high-SNR slope is
\begin{equation}\label{eq:Sinfty}
S_{\infty} = \lim\limits_{\rho \rightarrow \infty}\frac{R^\tsum(\rho)}{\log_2(\rho)},
\end{equation}
which corresponds to the multiplexing gain, i.e. the slope of the maximum sum rate curve at high SNR. The maximum sum rate is
\begin{subequations}\label{eq:SumRate}
\begin{align}
R^\tsum(\rho) &= \max_{\bw_1,\ldots,\bw_K} \sum_{\ell=1}^K \log_2\pp{1+\frac{\rho x_{\ell\ell}^2(\bw_\ell)}{1+\rho \sum_{k \neq \ell} x_{k\ell}^2(\bw_k)}}\\
& = \max_{\bw_1,\ldots,\bw_K} \sum_{\ell=1}^K \log_2\pp{1+\frac{\rho \pp{\abs{\ebh_{\ell\ell}^H \bw_\ell} - \norm{\mat{A}^H_{\ell\ell} \bw_\ell}\epsilon_{\ell\ell}}_+^2}{1+\rho \sum\limits_{k \neq \ell} \pp{\abs{\ebh_{k\ell}^H \bw_k} + \norm{\mat{A}^H_{k\ell} \bw_k}\epsilon_{k\ell}}^2}},
\end{align}
\end{subequations}
\noindent where $x_{\ell\ell}(\bw_\ell)$ and $x_{k\ell}(\bw_k)$ are from \eqref{eq:dir_gain} and \eqref{eq:int_gain} respectively. From \eqref{eq:Sinfty}, the high-SNR slope is
\begin{equation}\label{eq:Sinfty2}
S_{\infty} = \lim\limits_{\rho \rightarrow \infty}\frac{\max\limits_{\bw_1,\ldots,\bw_K} \sum\limits_{\ell=1}^K \log_2\pp{1+\frac{\rho \pp{\abs{\ebh_{\ell\ell}^H \bw_\ell} - \norm{\mat{A}^H_{\ell\ell} \bw_\ell}\epsilon_{\ell\ell}}_+^2}{1+\rho \sum\limits_{k \neq \ell} \pp{\abs{\ebh_{k\ell}^H \bw_k} + \norm{\mat{A}^H_{k\ell} \bw_k}\epsilon_{k\ell}}^2}}}{\log_2(\rho)},
\end{equation}
which is maximized when the interference is nulled at all receivers. Since $\mat{A}_{k\ell}$ is full rank, the worst-case interference gains cannot be nulled unless the transmitters switch their transmission off. In case the error $\epsilon_{k\ell}$ does not depend on the SNR, the interference gains as well as the intended power gain in the achievable rate of a link $\ell$ scale linearly with $\rho$. Hence, the high-SNR slope in \eqref{eq:Sinfty2} is zero if more than one link operate simultaneously. Therefore, in the high-SNR regime single-user transmission is optimal achieving the largest high-SNR slope of $S_\infty = 1$. The maximum sum rate is then
\begin{equation}\label{eq:SumRateImpCSI2}
    R^\tsum(\rho) = \log_2\pp{1+\rho \max_{\ell = 1,\ldots,K}\pp{\abs{\ebh_{\ell\ell}^H \bw^\mrt_\ell} - \norm{\mat{A}^H_{\ell\ell} \bw^\mrt_\ell}\epsilon_{\ell\ell}}_+^2},
\end{equation}
\noindent where only one user operates using MRT and full power transmission. Note that the condition that determines the dominant user does not only depend on the channel gains but also on the amount of uncertainty present at the transmitter.

In \cite{Jindal2006} it is shown that the channel estimation error has to scale linearly with the inverse SNR in order to achieve maximum multiplexing gain with ZF in a MISO broadcast channel. That is, $\epsilon^2_{k\ell}(\rho) \propto \frac{1}{\rho}$ should hold for all $k$ and $\ell$. We assume that the estimation error has the following dependence on SNR:
\begin{equation}\label{eq:errorScaling}
\epsilon_{k\ell}(\rho) = \frac{a_{k\ell}}{\sqrt{\rho}},
\end{equation}
\noindent where $a_{k\ell}$ is a constant. ZF beamforming according to the channel estimates is
\begin{equation}\label{eq:ZF}
\bw^\zf_k = \frac{\Pi_{\mat{Z}_k}^\perp \ebh_{kk}}{\norm{\Pi_{\mat{Z}_k}^\perp \ebh_{kk}}},
\end{equation}
\noindent where
\begin{equation}
\mat{Z}_k = \left[\ebh_{k1},\ldots,\ebh_{kk-1},\ebh_{kk+1},\ldots,\ebh_{kK}\right].
\end{equation}
\noindent Note that if the number of antennas at a transmitter $k$ is strictly less than the number of receivers $K$, then $\bw^\zf_k = \mat{0}$. Assuming $N_k\geq K$ for all $k=1,\ldots,K$, the maximum sum rate with ZF is
\begin{subequations}\label{eq:SumRate2}
\begin{align}
    R^\tsum(\rho)&= \sum_{\ell=1}^K \log_2\pp{1+\frac{\rho \pp{\abs{\ebh_{\ell\ell}^H \bw^\zf_\ell} - \norm{\mat{A}^H_{\ell\ell} \bw^\zf_\ell}\epsilon_{\ell\ell}(\rho)}_+^2}{1+\rho \sum\limits_{k \neq \ell} \pp{\underbrace{\abs{\ebh_{k\ell}^H \bw^\zf_k}}_{=0} + \norm{\mat{A}^H_{k\ell} \bw^\zf_k}\epsilon_{k\ell}(\rho)}^2}}\\
    &= \sum_{\ell=1}^K \log_2\pp{1+\frac{\pp{\sqrt{\rho}\abs{\ebh_{\ell\ell}^H \bw^\zf_\ell} - \sqrt{\rho} \norm{\mat{A}^H_{\ell\ell} \bw^\zf_\ell}\frac{a_{\ell\ell}}{\sqrt{\rho}}}_+^2}{1+ \sum_{k \neq \ell} \pp{ \sqrt{\rho}\norm{\mat{A}^H_{k\ell} \bw^\zf_k}\frac{a_{k\ell}}{\sqrt{\rho}}}^2}}\\
    &= \sum_{\ell=1}^K \log_2\pp{1+\frac{\pp{\sqrt{\rho}\abs{\ebh_{\ell\ell}^H \bw^\zf_\ell} - \norm{\mat{A}^H_{\ell\ell} \bw^\zf_\ell}{a_{\ell\ell}}}_+^2}{1+ \sum_{k \neq \ell} \pp{\norm{\mat{A}^H_{k\ell} \bw^\zf_k}{a_{k\ell}}}^2}}.
\end{align}
\end{subequations}
\noindent The high-SNR slope from \eqref{eq:Sinfty} with $N_k \geq K$ for all $k = 1,\ldots,K$ is
\begin{equation}\label{eq:Sinfty3}
S_{\infty} = \lim\limits_{\rho \rightarrow \infty}\frac{\sum_{\ell=1}^K \log_2\pp{1+\frac{\pp{\sqrt{\rho}\abs{\ebh_{\ell\ell}^H \bw^\zf_\ell} - \norm{\mat{A}^H_{\ell\ell} \bw^\zf_\ell}{a_{\ell\ell}}}_+^2}{1+ \sum_{k \neq \ell} \pp{\norm{\mat{A}^H_{k\ell} \bw^\zf_k}{a_{k\ell}}}^2}}}{\log_2(\rho)} = K.
\end{equation}
\noindent Hence, the maximum multiplexing gain of $K$ is achieved as with perfect CSI. Note, that we assumed that $N_k \geq K$ for all $k$. If there exists a transmitter $k$ such that $N_k < K$, then transmitter $k$ cannot perform ZF to $K$ receivers simultaneously. In this case, the highest multiplexing gain would be the maximum number of transmitters $m^*$ that have more antennas than $m^*$. The optimization problem to calculate the multiplexing gain is
\begin{subequations}\label{eq:multGain}
\begin{align}
m^* = \max & \quad m\\
\text{s.t.} & \quad m = \sum^K_{k=1} \eta_k,  \quad \eta_k =
\left\{
  \begin{array}{ll}
    1, & N_k \geq m; \\
    0, & N_k < m.
  \end{array}
\right.
\end{align}
\end{subequations}
\noindent Finding $m^*$ has to be performed iteratively. In Algorithm \ref{alg:multgain}, we provide a method to calculate $m^*$. First, the vector containing the number of antennas $[N_1,\ldots,N_K]$ is sorted in a weakly decreasing order to $[\tilde{N}_1,\ldots,\tilde{N}_K]$ with $\tilde{N}_i \geq \tilde{N}_{i+1}$. The multiplexing gain $m^*$ is found as the largest $k$ such that $\tilde{N}_k \geq k$. ZF transmission with $\tilde{N}_{m^* + 1}$ antennas reduces the multiplexing gain to $\tilde{N}_{m^* + 1} < \tilde{N}_{m^*} = m^*$. Consequently, in order to achieve the maximum multiplexing gain, only the links corresponding to the first $m^*$ entries in $[\tilde{N}_1,\ldots,\tilde{N}_K]$ should operate and perform ZF according to the channel estimates.

\begin{figure}[h]
\centering
[Algorithm 1 about here]
\end{figure}

\begin{figure}[h]
\centering
[Figure 2 about here]
\end{figure}

In \figurename~\ref{fig:sumrate}, the maximum sum rate is plotted for two links. Each transmitter has three antennas and the total power constraints at the transmitters are $P_k = 1, k=1,2$. The channel vector estimates and the uncertainty regions are generated as stated in Section \ref{sec:K-User} in the description of \figurename~\ref{fig:rateregion}. The error $\epsilon_{kj}$ for all $k,j = 1,2$, is chosen as indicated in the label of \figurename~\ref{fig:sumrate} where for perfect CSI $\epsilon_{kj} = 0$. For each transmitter $k$, we generate the efficient beamforming vectors from \eqref{eq:ParOptBeam} with the parameters $\lambda_{k\ell}$ uniformly sampled in a $0.001$ step-length between zero and one. The set of rate tuples is calculated from the generated beamforming vectors and the maximum sum rate is found by grid search. For perfect CSI, full multiplexing gain of two is achieved with joint ZF transmission. Also, for the error scaling linearly with the inverse SNR as in \eqref{eq:errorScaling}, a multiplexing gain of two is achieved as is obtained in \eqref{eq:Sinfty3}. If the error scales slower than linearly with inverse SNR, as in $\epsilon_{kj} = 1/\sqrt[3]{\rho}$, the loss in multiplexing gain can be observed. For constant error of $\epsilon_{kj} = 0.3$ the multiplexing gain is one where only a single link operates as is derived in \eqref{eq:SumRateImpCSI2}. The transition from the operation of the two links to the operation of a single link can be noticed at around $30$ dB SNR in the maximum sum rate curve for constant error.

\subsection{Efficiency at Low SNR}
In order to study the spectral efficiency in the low SNR regime, two performance measures were introduced in \cite{Verdu2002}. The first measure is $\ebnof_{min}$ which is the minimum energy per bit required for reliable communication. The smaller $\ebnof_{min}$ is for a system the more reliable is its operation. The second performance measure is $S_0$ which is the slope of the spectral efficiency curve at $\ebnof_{min}$. Since $S_0$ describes the growth of the spectral efficiency curve from $\ebnof_{min}$, the larger $S_0$ is for a system the greater is the gain for increasing energy per bit from $\ebnof_{min}$. Here, we use these measures to determine optimal beamforming in our setting at low SNR.

The \emph{spectral efficiency} $\sC_\ell\pp{\ebnof_\ell}$ of link $\ell$ is defined as the ratio of transmission rate $R_\ell$ to bandwidth $B$ \cite[Section III]{Verdu2002}:
\begin{equation}
\sC_\ell\pp{\ebno_\ell} = \frac{R_\ell}{B} \label{eq:fl},
\end{equation}
\noindent where $\ebnof_\ell$ is the energy per bit for link $\ell$ normalized over the background noise. Specifically, $\ebnof_\ell = {P_\ell}/\pp{N_0 R_\ell},$ where $P_\ell$ is the transmitted power. The spectral efficiency function is directly related to the common capacity expression\footnote{Notice the difference in the notation for the spectral efficiency of link $\ell$, $\sC_\ell$, and the capacity of link $\ell$, $\cC_\ell$.} $\cC_\ell(\SNR)$ through $\sC_\ell\pp{\ebnof_\ell} = \cC_\ell(\SNR)$ for the SNR which solves \cite{Verdu2002}
\begin{equation}\label{eq:Cebno}
\ebno \cC_\ell(\SNR) = \SNR.
\end{equation}
\noindent At low SNR, the spectral efficiency function $\sC_\ell\pp{\ebnof_\ell}$ can be expressed as \cite{Verdu2002}
\begin{eqnarray}
	\sC_\ell \pp{\ebno_\ell} \approx \frac{S_{0,\ell}}{3 dB}\left( \ebno_\ell\Big|_{dB} - \ebno_{min,\ell}\Big|_{dB} \right), \label{eq:approxlow}
\end{eqnarray}
with
\begin{equation}\label{eq:minebno}
\ebno_{min,\ell} = \frac{\log_e 2}{\dot{\cC_\ell}(0)} \quad \text{and} \quad S_{0,\ell} = \frac{ 2 \left[ \dot{\cC_\ell}(0) \right]^2}{- \ddot{\cC_\ell}(0)},
\end{equation}%The closer $\ebnof_\ell$ gets to $\ebnof_{min,\ell}$ the better is the approximation in \eqref{eq:approxlow}.
\noindent where $\dot{\cC_\ell}$ and $\ddot{\cC_\ell}$ are the first and second derivatives, respectively, with respect to SNR. In \eqref{eq:minebno}, $\ebnof_{min,\ell}$ is the minimum energy per bit required for reliable communication for link $\ell$. It is shown in \cite[Proposition 4]{Larsson2008a} that the minimum energy per bit in \eqref{eq:minebno} is also valid in the MISO IFC. Furthermore, it is shown that single-user decoding at the receivers in the low SNR regime is optimal to minimize $\ebnof_{min,\ell}$. The rate of link $\ell$ from \eqref{eq:rate} is
\begin{equation}\label{Crho}
\cC_\ell(\SNR) = \log_2\pp{1+\frac{\SNR x^2_{\ell\ell}(\bw_\ell)}{1+\SNR \sum_{k \neq \ell} x^2_{k\ell}(\bw_k)}},
\end{equation}
\noindent where $x_{\ell\ell}(\bw_\ell)$ and $x_{k\ell}(\bw_k)$ are calculated in \eqref{eq:dir_gain} and \eqref{eq:int_gain}, respectively. For the purpose of calculating the minimum energy per bit and the wideband slope in \eqref{eq:minebno} for a link $\ell$, we have
\begin{equation}\label{eq:firstDeriv}
\dot{\cC}_\ell(0) = x^2_{\ell\ell}(\bw_\ell),
\end{equation}
\noindent and
\begin{equation}\label{eq:secDeriv}
\ddot{\cC}_\ell(0) = - x^2_{\ell\ell}(\bw_\ell) \pp{x^2_{\ell\ell}(\bw_\ell) + 2 \sum\nolimits_{k \neq \ell} x^2_{k\ell}(\bw_k)}.
\end{equation}
\noindent From \eqref{eq:firstDeriv} and \eqref{eq:secDeriv}, we can calculate the minimum energy per bit and the wideband slope in \eqref{eq:minebno} as
\begin{equation}\label{eq:minebno2}
\ebno_{\min,\ell} = \frac{\log_e 2}{x^2_{\ell\ell}(\bw_\ell)} \quad \text{and} \quad S_{0,\ell} = \frac{ 2 x^2_{\ell\ell}(\bw_\ell)}{x^2_{\ell\ell}(\bw_\ell) + 2 \sum\nolimits_{k \neq \ell} x^2_{k\ell}(\bw_k)}.
\end{equation}
\noindent The minimum energy per bit from \eqref{eq:minebno2} is minimized by maximizing the intended power gain $x^2_{\ell\ell}(\bw_\ell)$, i.e. with MRT beamforming. Thus, joint robust MRT beamforming is optimal in the low SNR regime to achieve the minimum energy per bit for reliable communication. For perfect CSI at the transmitter
\begin{equation}\label{eq:minEbNopCSI}
    \ebno_{\min,\ell}^\pCSI = \frac{\log_e 2}{\sabs{\bh_{\ell\ell}^H \bw^\mrt_\ell}} = \frac{\log_e 2}{\snorm{\bh_{kk}}}.
\end{equation}
For imperfect CSI at the transmitters, we have
\begin{equation}\label{eq:minEbNoiCSI}
    \ebno_{\min,\ell}^\iCSI = \frac{\log_e 2}{\pp{\abs{\ebh_{\ell\ell}^H \bw^\rmrt_\ell} - \norm{\mat{A}^H_{\ell\ell} \bw^\rmrt_\ell}\epsilon_{\ell\ell}}_+^2},
\end{equation}
\noindent where $\bw^\rmrt_\ell$ is obtained from \eqref{eq:MRT_rob}. Comparing \eqref{eq:minEbNoiCSI} to \eqref{eq:minEbNopCSI}, the loss due to imperfect CSI can be observed.

\begin{figure}[h]
\centering
[Figure 3 about here]
\end{figure}

In \figurename~\ref{fig:ebnoregion}, the $\ebnof$ regions for two links with perfect and imperfect CSI are plotted. The $\ebnof$ region with imperfect CSI is smaller and contained in the region with perfect CSI. For perfect and imperfect CSI, joint MRT achieves the joint minimum $\ebnof$.

The wideband slope from \eqref{eq:minebno2} of link $\ell$ for perfect and imperfect CSI are
\begin{equation}\label{eq:wideSlopePCSI}
    S_{0,\ell}^\pCSI = \frac{2 \sabs{\bh_{\ell\ell}^H \bw_\ell}}{\sabs{\bh_{\ell\ell}^H \bw_\ell} + 2 \sum_{k \neq \ell} \sabs{\bh_{k\ell}^H \bw_k}},
\end{equation}
\noindent and
\begin{equation}\label{eq:wideSlope}
    S_{0,\ell}^\iCSI = \frac{2 \pp{\abs{\ebh_{\ell\ell}^H \bw_\ell} - \norm{\mat{A}^H_{\ell\ell} \bw_\ell}\epsilon_{\ell\ell}}_+^2}{\pp{\abs{\ebh_{\ell\ell}^H \bw_\ell} - \norm{\mat{A}^H_{\ell\ell} \bw_\ell}\epsilon_{\ell\ell}}_+^2 + 2 \sum\limits_{k \neq \ell} \pp{\abs{\ebh_{k\ell}^H \bw_k} + \norm{\mat{A}^H_{k\ell} \bw_k}\epsilon_{k\ell}}^2},
\end{equation}
\noindent respectively. All jointly achievable wideband slopes for the links constitute a slope region \cite{Verdu2002a,Muharemovic2003}.

\begin{figure}[h]
\centering
[Figure 4 about here]
\end{figure}

In \figurename~\ref{fig:sloperegion}, the wideband slope regions for two links with perfect and imperfect CSI are plotted. The regions are generated by utilizing the parametrization of efficient beamforming vectors in \eqref{eq:ParOptBeam}. The wideband slope region with perfect CSI corresponds to the box in which the maximum wideband slope of two for both links can be achieved simultaneously (with ZF transmission). The wideband slope region with imperfect CSI is contained in that of perfect CSI, and the maximum wideband slope of two is achievable for one link only if the other link performs ZF, i.e. switches its transmission off. The wideband slopes with joint MRT are marked for perfect and imperfect CSI. These correspond to the slope of the spectral efficiency curve at $\ebno_{\min,\ell}$ as is given in \eqref{eq:approxlow}.

\begin{figure}[h]
\centering
[Figure 5 about here]
\end{figure}

In \figurename~\ref{fig:spect_ebno2}, the spectral efficiency of link $1$ in the two user MISO IFC setting used in \figurename~\ref{fig:ebnoregion} and \figurename~\ref{fig:sloperegion} is plotted for perfect and imperfect CSI. The curves are found by using the relation $\sC_\ell\pp{\ebnof_\ell} = \cC_\ell(\SNR)$ for the SNR which solves \eqref{eq:Cebno}. For the plot of $\sC_1 \pp{\ebnof}$ curves, joint MRT beamforming is assumed for all values of $\ebnof_1$. Joint MRT is however not Pareto efficient in mid or high SNR. From a game theoretic perspective \cite{Osborne1994}, joint MRT is the dominant strategy equilibrium (unique Nash equilibrium) of a strategic game between the links \cite{Larsson2008}. In other words, if the transmitters are noncooperative, they will jointly perform MRT beamforming. Accordingly, the performance plotted in \figurename~\ref{fig:spect_ebno2} is for two noncooperative links. The slopes of the tangents at $\ebno_{\min,1}^\pCSI$ and $\ebno_{\min,1}^\iCSI$ correspond to the wideband slopes in \figurename~\ref{fig:sloperegion} with joint MRT for perfect and imperfect CSI, respectively. The loss in minimum $\ebnof$ as well as wideband slope is apparent due to uncertainty in the channel information.

\section{Conclusions}\label{sec:conclusion}

We consider a $K$-user MISO IFC. The CSI is assumed to be perfect at the receivers but imperfect at the transmitters. Channel vector estimates at a transmitter include channel estimation errors which are assumed to be bounded in an elliptical region. The geometry of the uncertainty region associated with a channel vector estimate is known at the transmitter. In this setting, robust beamforming optimizes worst-case power gains at the receivers. We derive the worst-case intended and interference power gains at the receivers and formulate accordingly the worst-case achievable rates for the links. Afterwards, we characterize the robust beamforming vectors necessary to operate at any Pareto optimal point in the robust achievable rate region. The efficient beamforming vectors of each transmitter are found as a solution of a SOCP which can be solved efficiently. The spectral efficiency of the multi-link system with imperfect channel state information is analyzed in the high and low SNR regime. At high SNR, achieving full multiplexing gain with zero forcing transmission using the channel vector estimates requires the channel estimation error to reduce linearly with the SNR. If the error does not depend on SNR, single-user transmission is optimal. In the low SNR regime, it is shown that joint robust maximum ratio transmission optimizes the minimum energy per bit for reliable communication.

%\section*{References}

%\bibliographystyle{elsarticle-num}
%\bibliography{references}
%\input{sections/bio}
\pagebreak[4]

\begin{figure}[h]
  \centering
  \includegraphics[width=\linewidth,clip]{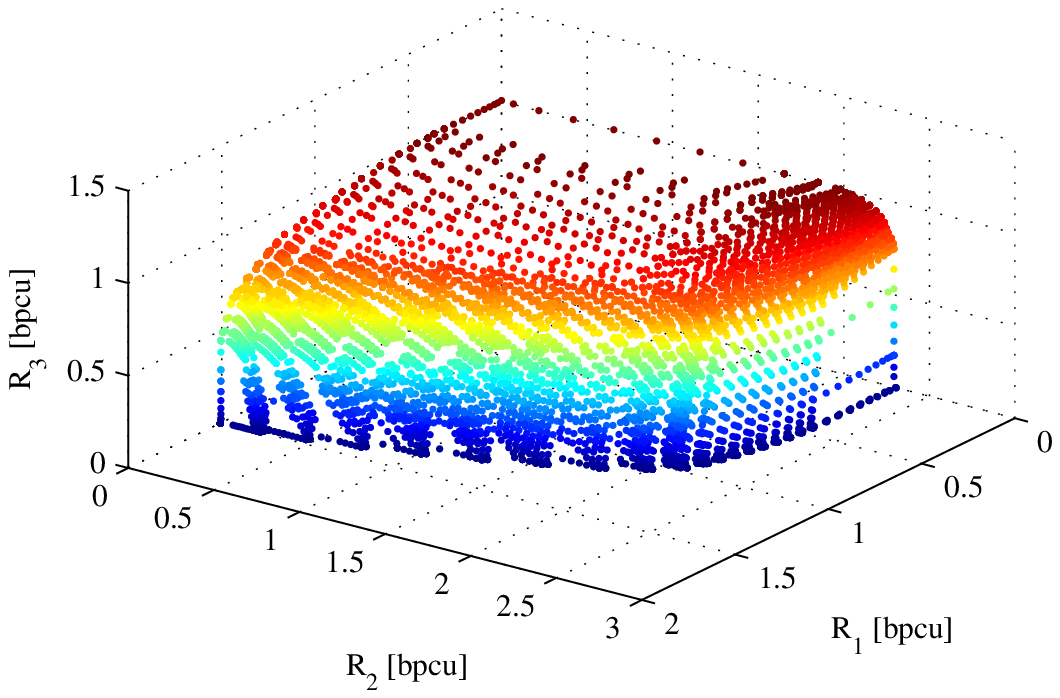}
  \caption{Pareto boundary of a three-user robust rate region at $0$ dB SNR.}\label{fig:rateregion}
\end{figure}

\pagebreak[4]

\begin{algorithm}[h]
    \IncMargin{2em}
    %\KwResult{Multiplexing Gain}
    \KwIn{$[N_1,\ldots,N_K]$}
    Sort $[N_1,\ldots,N_K]$ in weakly decreasing order to $[\tilde{N}_1,\ldots,\tilde{N}_K]$ such that $\tilde{N}_i \geq \tilde{N}_{i+1}$\;
    \For {$k = 1,\ldots,K$}{
    \If {$\tilde{N}_k < k$}{
     $m^* = k - 1$\;
     \textbf{break}\;
    }
    }
    \KwOut{$m^*$}
    \caption{Algorithm to calculate the maximum multiplexing gain.}
    \label{alg:multgain}
\end{algorithm}%

\pagebreak[4]

\begin{figure}[h]
  \centering
  \includegraphics[width=\linewidth,clip]{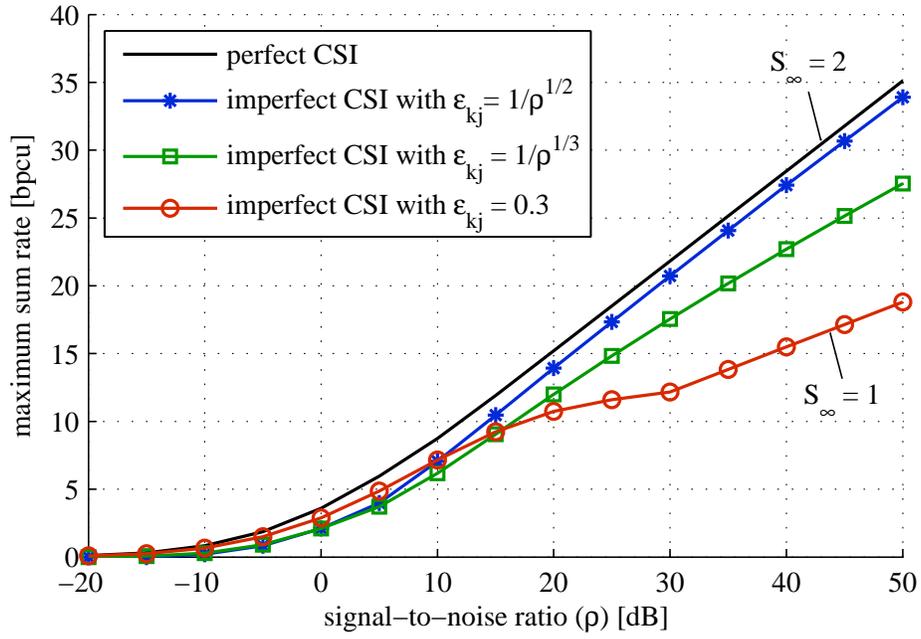}
  \caption{Comparison of the maximum sum rate of two links for different models of channel estimation errors $\epsilon_{kj}, k,j=1,2$.} \label{fig:sumrate}
\end{figure}

\pagebreak[4]

\begin{figure}[h]
  \centering
  \includegraphics[width=\linewidth,clip]{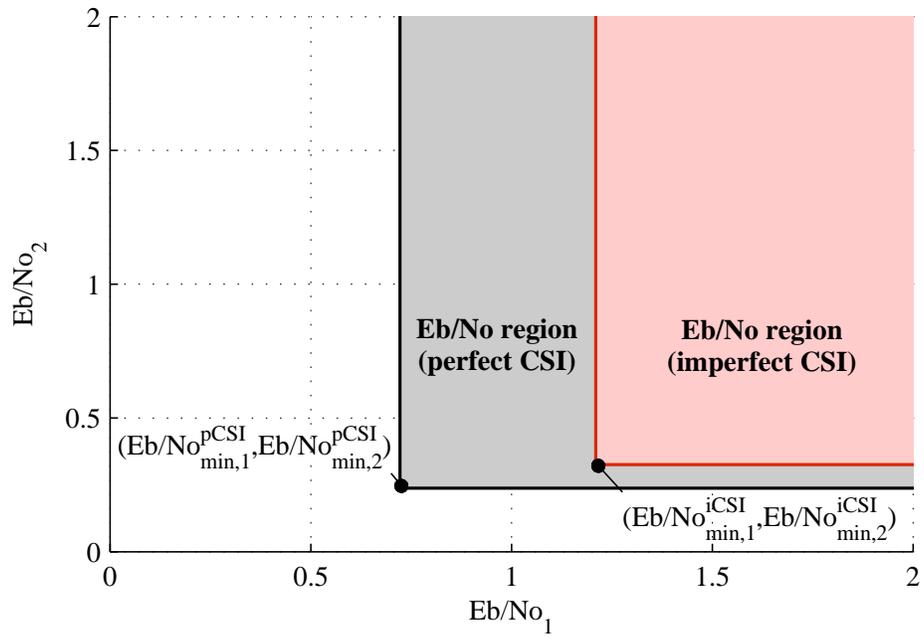}
  \caption{Plot of minimum energy per bit regions of two links for perfect and imperfect CSI.}\label{fig:ebnoregion}
\end{figure}

\pagebreak[4]

\begin{figure}[h]
  \centering
  \includegraphics[width=\linewidth,clip]{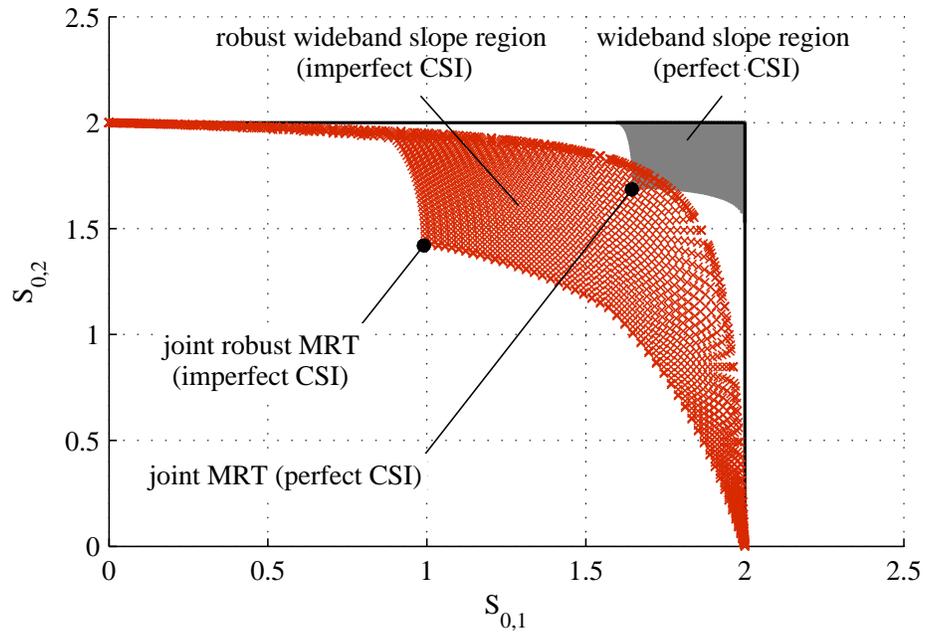}
  \caption{Plot of wideband slope regions of two links for perfect and imperfect CSI.}\label{fig:sloperegion}
\end{figure}

\pagebreak[4]

\begin{figure}[h]
  \centering
  \includegraphics[width=\linewidth,clip]{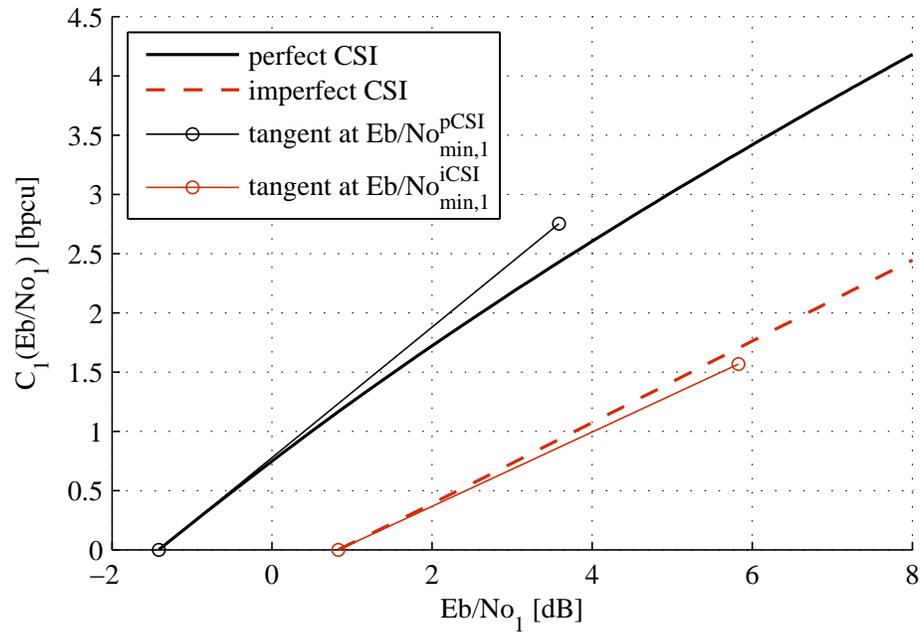}
  \caption{Plot of the spectral efficiency of link $1$ for perfect and imperfect CSI in a two-user MISO IFC.}\label{fig:spect_ebno2}
\end{figure}
\end{document}